\newtheorem{theorem}{Theorem}
\newtheorem{lemma}{Lemma}
\newenvironment{customthm}[1]
  {\innercustomthm}
  {\endinnercustomthm}
\newtheorem{proposition}{Proposition}
\newtheorem{corollary}{Corollary}
\newcommand\numberofauthors[1]{}
\newcommand\affaddr[1]{#1}
\newcommand\alignauthor{}
\let\email=\url
\def\etal{{et~al.}}
\newcommand{\NN}{\mathbb{N}}
\newcommand{\RR}{\mathbb{R}}
\begin{document}

\title{Diffuse Reflection Diameter in Simple Polygons\footnote{A preliminary version of this work has been published as: G. Barequet, S. M. Cannon, E. Fox-Epstein, B. Hescott, D. L. Souvaine, C. D. T\'{o}th, A. Winslow, Diffuse reflections in simple polygons, Electronic Notes in Discrete Mathematics {\bf 44(5)} (2013), 345--350.}}

\author{
\numberofauthors{7}
\alignauthor
Gill Barequet\\
\affaddr{Department of Computer Science}\\
\affaddr{Technion}\\
\affaddr{Haifa, Israel}\\
\email{barequet@cs.technion.ac.il}
\and
\alignauthor
Sarah M. Cannon\thanks{Supported in part by National Science Foundation grant CCF-0830734.}\\
\affaddr{College of Computing}\\
\affaddr{Georgia Institute of Technology}\\
\affaddr{Atlanta, GA}\\
\email{sarah.cannon@gatech.edu}
\and
\alignauthor
Eli Fox-Epstein\footnotemark[2]\\
\affaddr{Department of Computer Science}\\
\affaddr{Brown University}\\
\affaddr{Providence, RI}\\
\email{ef@cs.brown.edu}
\and
\alignauthor
Benjamin Hescott\\
\affaddr{Department of Computer Science}\\
\affaddr{Tufts University}\\
\affaddr{Medford, MA}\\
\email{hescott@cs.tufts.edu}
\and
\alignauthor
Diane L. Souvaine\footnotemark[2]\\
\affaddr{Department of Computer Science}\\
\affaddr{Tufts University}\\
\affaddr{Medford, MA}\\
\email{dls@cs.tufts.edu}
\and
\alignauthor
Csaba D. T\'oth\footnotemark[2]\\
\affaddr{Department of Mathematics}\\
\affaddr{California State University Northridge}\\
\affaddr{Los Angeles, CA}\\
\email{cdtoth@acm.org}
\and
\alignauthor
Andrew Winslow\footnotemark[2]\\
\affaddr{D\'{e}partment d'Informatique}\\
\affaddr{Universit\'{e} Libre de Bruxelles}\\
\affaddr{Brussels, Belgium}\\
\email{awinslow@ulb.ac.be}
}

\date{}
\maketitle

\begin{abstract}
We prove a conjecture of Aanjaneya, Bishnu, and Pal that the minimum number of \emph{diffuse reflections} sufficient to illuminate the interior of any simple polygon with $n$ walls from any interior point light source is $\lfloor n/2 \rfloor - 1$. Light reflecting diffusely leaves a surface in all directions, rather than at an identical angle as with specular reflections.
\end{abstract}

\section{Introduction}
\label{sec:intro}

For a light source placed in a polygonal room with mirror walls, light rays that reach a wall at angle $\theta$, with respect to the normal of the wall's surface, also leave at angle $\theta$.
In other words, for these \emph{specular reflections} the angle of incidence equals the angle of reflection (see Fig.~\ref{fig:two-reflections}).

\begin{figure}[hbt]

\centering
\includegraphics[width=0.5\textwidth]{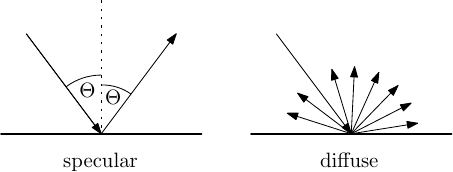}
\caption{Two types of reflections.
Specular reflection occurs on mirrored surfaces (left) and diffuse reflection occurs on matte surfaces (right).}
\label{fig:two-reflections}
\end{figure}

Klee~\cite{K69} asked whether the interior of any room defined by a simple polygon with mirrored walls is completely illuminated by placing a single point light anywhere in the interior. Tokarsky~\cite{T95} gave a negative answer to this question by constructing simple polygons and pairs of points $(s, t)$ such that there is no path from $s$ to $t$ with specular reflections off the walls of the room.

\begin{figure}[htb]
\centering
\includegraphics[width=.95\textwidth]{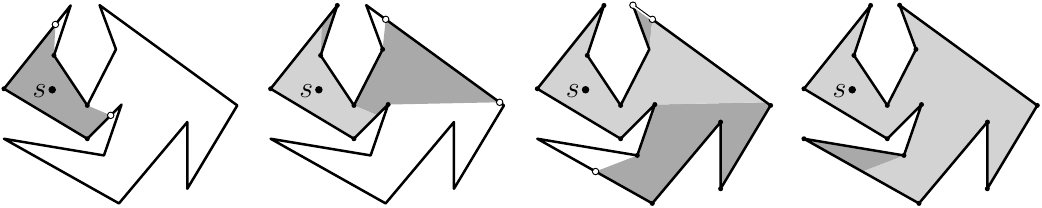}
\caption{The regions of the polygon illuminated by a light source $s$ after 0, 1, 2, and 3 diffuse reflections.}
\label{fig:diffuse-ex}
\end{figure}

On the other hand, if the walls of the polygonal room $P$ reflect light diffusely in all directions, then it is easy to see that every point in $P$ is illuminated after at most $n$ \emph{diffuse reflections} (Fig.~\ref{fig:diffuse-ex}). For diffuse reflections, we assume that the vertices
of $P$ absorb light, and that light does not propagate along the edges of $P$.
A \emph{diffuse reflection path} is a polygonal path $\gamma$ contained in $P$ such that
every interior vertex of $\gamma$ lies in the relative interior of some edge of $P$,
and the relative interior of every edge of $\gamma$ is in the interior of $P$.

Aronov \etal~\cite{AD+98} were the first to study $V_k(s)$, the part of the polygon illuminated by a light source $s$ after at most $k$ diffuse reflections. Formally, $V_k(s)$ is the set of points $t\in P$ such that there is a diffuse reflection path from $s$ to $t$ with at most $k$ interior vertices. In particular, $V_0(s)$ is the visibility region of point $s$ in the interior of $V$ (where the boundary of $P$ is considered opaque), hence it is a simply-connected region with $O(n)$ edges~\cite{AGS00}. Aronov \etal proved that $V_1(s)$ is simply connected with at most $\Theta(n^2)$ edges. Brahma \etal~\cite{BPS04} constructed simple polygons and a source $s$ such that $V_2(s)$ is not simply connected, and showed that $V_3(s)$ can have as many as $\Omega(n)$ holes.
Extending the work of~\cite{AD+98}, Aronov \etal~\cite{ADD+98,AD06} and Prasad \etal~\cite{PPD98} bounded the complexity of $V_k(s)$ at $O(n^9)$ and $\Omega(n^2)$ for all $k$.
It remains an open problem to close the gap between these bounds for $k \geq 2$.

Finding a shortest diffuse reflection path between two given points in a simple polygon by brute force is possible in $O(n^{10})$ time using the result of Aronov \etal~\cite{AD06}.
Ghosh \etal~\cite{GG+09} presented a 3-approximation in $O(n^2)$ time, and their
approximation applies even if the polygon $P$ has holes.

\begin{figure}[htb]
\centering
\includegraphics[width=0.6\textwidth]{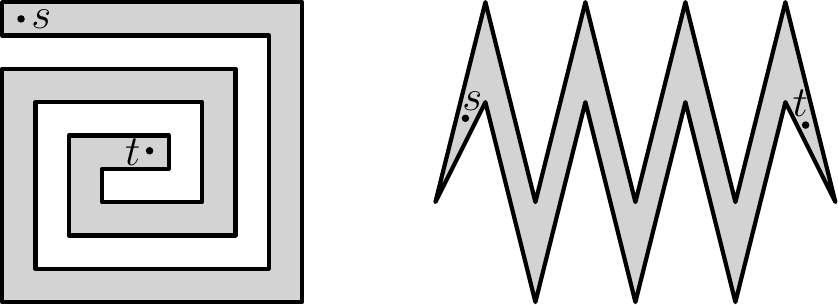}
\caption{Left: An orthogonal spiral polygon with $n=20$ vertices~\cite{ABP08}, where every diffuse
reflection path between $s$ and $t$ has at least $\lceil n/2 \rceil - 2 = 8$ turns.
Right: A zig-zag polygon with $n=16$ vertices where every diffuse reflection path between $s$ and $t$ has at least $\lfloor n/2 \rfloor - 1=7$ reflections.}
\label{fig:spiral-snake}
\end{figure}

\paragraph{Results}
We determine the minimum number of diffuse reflections sufficient to illuminate the interior of any simple polygon with $n$ vertices from any interior point $s$. For a simple polygon $P$, the \emph{diffuse reflection diameter} $D(P)$ is the smallest $k\in \mathbb{N}_0$ such that for every two  points $s,t\in {\rm int}(P)$, there is a diffuse reflection path between $s$ and $t$ with at most $k$ interior vertices (i.e., with at most~$k$ reflections). For an integer $n\geq 3$, let $D(n)$ be the largest diffuse reflection diameter $D(P)$ over all simple polygons $P$ with $n$ vertices. Aanjaneya \etal~\cite{ABP08} conjectured that $D(n) \leq \lceil n/2 \rceil - 1$ and constructed a family of polygons that yields $D(n) \geq \lfloor n/2 \rfloor - 2$; see Fig.~\ref{fig:spiral-snake}~(left).
The family of zig-zag polygons (Fig.~\ref{fig:spiral-snake}, right) shows that $D(n) \geq \lfloor n/2 \rfloor - 1$ for all $n\geq 3$. Here we prove that this bound is tight.

\begin{theorem}\label{thm:upper}
We have $D(n)= \lfloor n/2\rfloor - 1$ for every integer $n\geq 3$.
\end{theorem}

When the points $s$ and $t$ are allowed to be on the boundary of $P$, the minimum number of diffuse reflections may be larger, since a diffuse reflection path cannot have edges along the boundary of $P$. Similarly to $D(P)$, we define $\overline{D}(P)$ as the smallest $k\in \mathbb{N}_0$ such that for every two points $s,t\in P$ (in the interior or on the boundary of $P$), there is a diffuse reflection path between $s$ and $t$ with at most $k$ interior vertices. For $n\geq 3$, let $\overline{D}(n)$ be the maximum $\overline{D}(P)$ over all simple polygons $P$ with $n$ vertices. We determine $\overline{D}(n)$ for all $n\geq 3$.

\begin{theorem}\label{thm:boundary}
We have $\overline{D}(3)=2$ and $\overline{D}(n)= \lfloor n/2\rfloor$ for every integer $n\geq 4$.
\end{theorem}

\paragraph{Related Results for Link Paths}
The diffuse reflection path is a special case of a \emph{link path}, which has been studied extensively due to its applications in motion planning, robotics, and curve compression~\cite{G07,MSD00}.
The \emph{link distance} between two points, $s$ and $t$, in a simple polygon $P$ is the minimum number of edges in a polygonal path between $s$ and $t$ that lies entirely in $P$. In a polygon $P$ with $n$ vertices, the link distance between two points can be computed in $O(n)$ time~\cite{S86}. The \emph{link diameter} of $P$, the maximum link distance between any two points in $P$, can be computed in $O(n\log n)$ time~\cite{S90}. By contrast, no polynomial time algorithm is known for computing the diffuse reflection diameter of a simple polygon.

\section{Preliminary Definitions}

For a set $S \subseteq\RR^2$, let ${\rm int}(S)$ and ${\rm cl}(S)$ denote the interior of $S$ and the closure of $S$, respectively. The boundary of $S$, denoted $\partial S$, is ${\rm cl}(S)\setminus {\rm int}(S)$. The relative interior of a line segment $pq$ in the plane is denoted ${\rm relint}(pq)$. Let $d(p, q)$ be the Euclidean distance between points $p$ and $q$ in the plane.

Let $P$ be a simple closed polygonal domain (for short, \emph{simple polygon}) with $n$ vertices, where  $n\geq 3$.
We say that two points $s,t\in P$ \emph{see} each other (or, are \emph{visible} to each other) if
${\rm relint}(st)\subset {\rm int}(P)$. In particular, consecutive vertices of a diffuse reflection path see each other.\footnote{Note that a more relaxed definition of visibility, that requires only $st\subset P$,
is common in the literature~\cite{AGS00}.}

A \emph{chord} of $P$ is a closed line segment $ab$, such that $a,b\in \partial P$ and ${\rm relint}(ab)\subset {\rm int}(P)$. Two line segments (e.g., chords of $P$) \emph{cross} each other if there is a point in the relative interior of both segments, but the two segments are not collinear. We define the \emph{visibility polygon} of a line segment $ab$ of $P$, denoted $V_0(ab)$, as the set of points visible from some point in ${\rm relint}(ab)$. ($V_0(ab)$ is also known as the \emph{weak visibility polygon} of the relative interior of $ab$~\cite{AGS00}.) A subset $U$ of $P$ \emph{weakly covers} an edge $e$ of $P$ if $U$ intersects ${\rm relint}(e)$.

\section{A Sequence of Regions $R_k$}
\label{sec:regions}

Let $P$ be a simple polygon with $n$ vertices, and let $s \in P$. Instead of tackling $V_k(s)$ directly, we recursively define an infinite sequence of simply-connected regions $R_0 \subseteq R_1 \subseteq R_2 \subseteq \ldots$ such that $R_0=V_0(s)$ and $R_k \subseteq V_k(s)$ for all $k\in \NN$. In Section~\ref{sec:upper}, we prove ${\rm int}(P)\subseteq R_{\lfloor n/2 \rfloor - 1}$ for all $s\in {\rm int}(P)$ and $n\geq 3$, which immediately implies Theorem~\ref{thm:upper}. In Section~\ref{thm:boundary}, we prove $P\subseteq R_{\lfloor n/2 \rfloor}$ for all $s\in P$ and $n\geq 4$, which implies Theorem~\ref{thm:boundary}.

Let $R_0=V_0(s)$. In the remainder of this section, we recall a few well-known characteristics of $V_0(s)$, and then formulate properties (i)--(iv) that we wish to maintain for all $R_k$, $k\in \NN_0$. Using (i)--(iv), we define $R_k$, $k\in \NN$, recursively, and show that (i)--(iv) are maintained in each step. Finally, we prove $R_k(s)\subseteq V_k(s)$ for all $k\in \NN_k$.

\begin{figure}[htb]
\centering
\includegraphics[width=.95\textwidth]{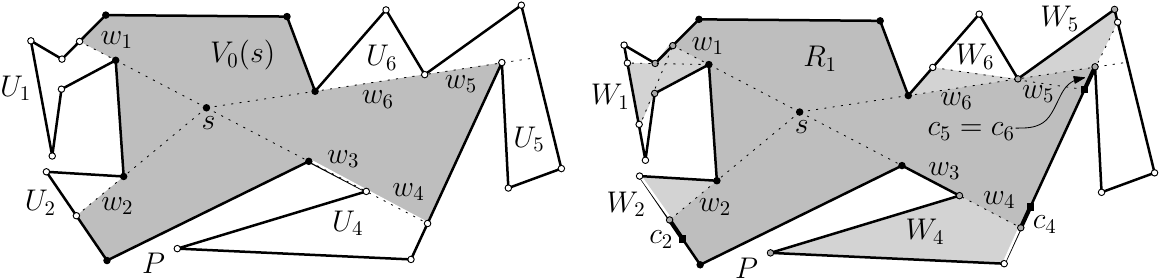}
\caption{Left: A simple polygon $P$ with a light source $s$.
The visibility polygon $V_0(s)$ has six windows: $w_1,\ldots , w_6$.
Window $w_3$ is the only degenerate window. Windows $w_1$ and $w_3$ are saturated, while the other windows are unsaturated.
Right: $R_1$ is the union of the closure of $V_0(s)$ and the
visibility polygons $W_{w_i}$ for $i=1,\ldots ,6$.}
\label{fig:preliminary}
\end{figure}

\paragraph{Properties of $V_0(s)$}
Recall that $V_0(s)$ is the set of all points $t\in P$ such that ${\rm relint}(st)\subseteq {\rm int}(P)$. Refer to Fig.~\ref{fig:preliminary}. As such, $V_0(s)$ is the union of (an infinite number of) closed line segments, each of which is incident to $s$ and some point in $\partial P$, hence $V_0(s)$ is simply connected. Consequently, the boundary of $V_0(s)$ consists of some line segments along $\partial P$ and possibly segments lying on rays emitted by $s$, which may contain a chord of $P$. However, $ V_0(s) \cap \partial P$ contains only one point along each ray emitted by $s$. For a set $U$, where $U\subseteq P$, we define a \emph{window} of $U$ to be a chord of $P$ contained in $\partial U$, or an edge $e$ of $P$ such that $e\subset \partial U$ but ${\rm relint}(e)\not\subset U$. A window which is an edge of $P$ is called a \emph{degenerate} window. See Fig.~\ref{fig:preliminary} for examples.

If the vertices of $P$ and $s$ are in general position (that is, no three points in a line), then every ray emitted by $s$ contains at most one window $V_0(s)$, and no window is degenerate. In general, however, a ray may contain several collinear windows $V_0(s)$, some of which may be degenerate (Fig.~\ref{fig:preliminary}). Suppose that $ab$ is a window of $V_0(s)$ such that $a$ lies in the interior of $sb$. For a window $ab$ of $V_0(s)$, consider the maximal line segment $a'b'$ such that $ab\subseteq a'b'$ and $a'b'\subseteq \partial V_0(s)$ (possibly, $ab=a'b'$). Then the rays emitted by $s$ can reach $\partial P$ in a neighborhood of $b'$, and a sufficiently small neighborhood contains a segment $c\subseteq \partial P$, where ${\rm relint}(c)$ is visible from the other endpoint $a$ of the window. The windows of $V_0(s)$ that lie on distinct rays are necessarily disjoint.
Consequently, the region $R_0=V_0(s)$ satisfies the following properties.

\begin{enumerate}\itemsep 0pt
\item[{\rm \bf (i)}] The closure of $R_k$, ${\rm cl}(R_k)$, is a simple polygon whose boundary consists of chords of $P$ and line segments contained in the boundary of $P$.
\item[{\rm \bf (ii)}] The endpoints of every window of $R_k$ can be labeled by $a$ and $b$ such that ${\rm cl}(R_k)$ has an interior angle of at least $180^\circ$ at $a$.
\item[{\rm\bf (iii)}] For every window $ab$ of $R_k$, there is a nontrivial line segment $c$ such that an endpoint of $c$ lies on the ray $\overrightarrow{ab}$, ${\rm relint}(c)$ lies in the relative interior of an edge of $P$, ${\rm relint}(c)\subset R_k$, and $a$ sees every point in ${\rm relint}(c)$.
\item[{\rm \bf(iv)}] Any two windows of $R_k$ are disjoint or collinear; and for any two adjacent windows, there is a common segment $c$ that satisfies property (iii).
\end{enumerate}

\paragraph{Recursive construction of $R_k$, $k\geq 1$}
We can now construct $R_{k+1}$ for all $k\in \NN_0$, assuming that $R_k$ is already defined and satisfies (i)--(iv). Intuitively, we construct $R_{k+1}$ by extending $R_k$ beyond each of its windows with a visibility region as follows (refer to Fig.~\ref{fig:preliminary}). A window $ab$ of $R_k$ is \emph{saturated} if every chord of $P$ that crosses $ab$ has an endpoint in $R_k$; otherwise, it is \emph{unsaturated}. Note that every degenerate window is saturated, because a degenerate window crosses no chords.

Each nondegenerate window $ab$ of $R_k$ decomposes $P$ into two simple polygons; let $U_{ab}$ denote the polygon that is disjoint from ${\rm int}(R_k)$. For a degenerate window, let $U_{ab}=ab$. For each window $ab$, we define a set $W_{ab}$ as follows.
If $ab$ is saturated, then let $W_{ab}=V_0(ab)\cap U_{ab}$.
If $ab$ is unsaturated, then let $c\subset R_k\cap \partial P$ be the segment described in property (iii), and let $W_{ab}=V_0(c)\cap U_{ab}$. Let $R_{k+1}$ be the union of ${\rm cl}(R_k)$ and the sets $W_{ab}$ for all windows $ab$ of $R_k$. The definition of the regions $R_k$, $k\in \NN_0$, readily implies that properties (i)--(iv) are maintained for $R_{k+1}$.

\begin{proposition}
\label{pp:property}
Let $P$ be a simple polygon and $s\in P$. For every $k\in \mathbb{N}_0$, region~$R_k$ satisfies properties (i)--(iv).
\end{proposition}
\begin{proof}
We proceed by induction on $k\in \mathbb{N}_0$. For $k=0$, the region $R_0$ is the visibility polygon $V_0(s)$ of point $s$ in $P$, and properties (i)--(iv) are easily verified (see Figure~\ref{fig:preliminary}, left).
Suppose $R_k$ satisfies (i)--(iv) for some $k\in \mathbb{N}_0$. If $R_k$ has no window, then
${\rm cl}(R_k)=P$ and $R_{k+1}=P$, hence properties (i)-(iv) trivially hold for $R_{k+1}$.
If $R_k$ has at least one window, then $R_{k+1}$ is the union of ${\rm cl}(R_k)$ and the visibility polygons $W_{ab}$ for all windows $ab$. By definition, $W_{ab}$ contains ${\rm relint}(ab)$ for both saturated and unsaturated window $ab$. Each $W_{ab}$ satisfies properties (i)--(iv) within $U_{ab}$. This proves properties (i)--(iii) for $R_{k+1}$, and (iv) for windows adjacent in each $W_{ab}$.

It remains to establish (iv) for pairs of windows, $w$ and $w'$, of $R_{k+1}$ that lie on the boundary of $W_{ab}$ and $W_{a'b'}$, where $ab$ and $a'b'$ are distinct windows of $R_k$. Suppose that their common endpoint is $x=w\cap w'$. Then $x$ is also a common endpoint of $ab$ and $a'b'$. Since $R_k$ satisfies (iv) by the induction hypothesis, the windows $ab$ and $a'b'$ are collinear, and they have a common segment $c$ satisfying (iii). Consequently, $w$ and $w'$ lie on the same side of $ab\cup a'b'$. Note that $ab$ is unsaturated, otherwise $W_{ab}$ would weakly cover the edge of $U_{ab}$ incident to $x$, and $w$ could not be incident to $x$. Analogously, $a'b'$ is unsaturated. However, if both $ab$ and $a'b'$ are unsaturated, then $V_0(c)$ weakly covers the edge of $U_{ab}$ or $U_{a'b'}$ incident to $x$. Therefore, at most one of $w$ and $w'$ can be incident to $x$. We conclude that the windows $w$ and $w'$ of $R_{k+1}$ are disjoint, proving property (iv) for $R_{k+1}$.
\end{proof}

The next proposition justifies that the closure of $R_k$ is contained in $V_{k+1}(s)$ if $R_k\subseteq V_k(s)$ and $s\in {\rm int}(P)$.

\begin{proposition}
\label{pp:closure}
Let $s\in {\rm int}(P)$ and $k\in \mathbb{N}_0$.
For every set $U\subseteq V_k(s)$, we have ${\rm cl}(U)\subseteq V_{k+1}(s)$.
\end{proposition}

\begin{proof}
Let $p\in \partial U\setminus U$.
Since $p\in {\rm cl}(U)$ and ${\rm cl}(U)\subseteq {\rm cl}(V_k(s))$, we have $s\in {\rm cl}(V_k(p))$ by symmetry. For every $i\in \mathbb{N}$, there is a point $s_i \in V_k(p)$ lying in a $\frac{1}{i}$-neighborhood of $s$ such that there is a diffuse reflection path $(p,r_i(1),\ldots , r_i(\ell),s_i)$ with $\ell\leq k$, where the points $r_i(1),\ldots , r_i(\ell)$ lie in the interior of some edges of $P$.

By construction, we have $\lim_{i \rightarrow \infty}d(s_i, s) = 0$, and we may assume by compactness that there is a point $r\in \partial P$ such that $\lim_{i \rightarrow \infty}d(r_i(\ell), r) = 0$. The ray $\overrightarrow{r_i(\ell)s_i}$ hits $\partial P$ at a point $q_i$, and we may assume that there is a point $q\in \partial P$ such that $\lim_{i \rightarrow \infty}d(q_i(\ell), q) = 0$, where $s$ lies on the chord $rq$. For a sufficiently large $i\in \mathbb{N}$, there is a point $q'\in \partial P$ in a neighborhood of $q$ that lies in the interior of some edge of $P$ and directly sees both $r_i(\ell)$ and $s$. Consequently, there is a diffuse reflection path $(p,r_i(1),\ldots , r_i(\ell),q',s)$ of length at most $k+1$ between $p$ and $s$. It follows that $p \in V_{k+1}(S)$, and so  ${\rm cl}(U)\subseteq V_{k+1}(s)$ as desired.
\end{proof}

\begin{corollary} \label{cor:rk-in-vk}
If $s\in {\rm int}(P)$, then $R_k\subseteq V_k(s)$ for all $k\in \mathbb{N}_0$.
\end{corollary}
\begin{proof}
We prove the statement by induction on $k$.
In the base case we have $R_0 = V_0(s)$ by definition.
Suppose $R_k\subseteq V_k(s)$ for some $k\in \mathbb{N}_0$. By Proposition~\ref{pp:closure}, ${\rm cl}(R_k)\subseteq V_{k+1}(s)$. For every window $ab$ of $R_k$, we show that $W_{ab}\subseteq V_{k+1}$. Specifically, consider the two cases in the construction of $W_{ab}$.

First, suppose $ab$ is a saturated window of $R_k$. Then every point $t\in W_{ab}$ sees some point
$x\in {\rm relint}(ab)$, and $tx$ is contained in a chord $ty$ of $P$, where $y\in R_k$. If $y$ is in the relative interior of an edge of $P$, then a diffuse reflection path from $s$ to $y$ can be extended to $t$ using a diffuse reflection at $y$. Otherwise, note that $t$ also sees some neighborhood of $x$ within
${\rm relint}(ab)$, hence some neighborhood of $y$ within $\partial P\cap R_k$. Again, a diffuse reflection path from $s$ to such a point can be extended to $t$.

Now suppose that $ab$ is unsaturated. Then every point $t\in W_{ab}$ sees a point in the relative interior of segment $c$, where ${\rm relint}(c)\subseteq R_k$ and ${\rm relint}(c)$ lies in the relative interior of an edge of $P$. A diffuse reflection path from $s$ to any point in $c$ can be extended to $t$ via a diffuse reflection in $c$. In both cases, we have shown $W_{ab}\subset V_{k+1}(s)$.
Consequently, $R_{k+1}\subseteq V_{k+1}(s)$.
\end{proof}

\paragraph{Weakly covered edges}
We associate two crucial parameters with the regions $R_k$, $k\in \NN_0$. For every $k\in \NN_0$, let $\mu_k$ be the number of edges of $P$ weakly covered by $R_k$, and $\lambda_k$ the total number of windows of $R_k$. We derive a lower bound on the number of new edges weakly covered in each round.

\begin{lemma}\label{lem:saturated}
For every $k\in \NN_0$,
\begin{itemize}\itemsep -1pt
\item[{\rm (1)}] We have $\mu_{k+1}\geq \mu_k+\lambda_k$; and
\item[{\rm (2)}] If all windows of $R_k$ are saturated, then $\mu_{k+1}\geq \min(\mu_k+\lambda_k+1,n)$.
\end{itemize}
\end{lemma}
\begin{proof}
Recall that ${\rm cl}(R_k)\subseteq R_{k+1}$, and so $R_{k+1}$ contains all degenerate windows of $R_k$. Now consider nondegenerate windows of $R_k$.

Let $ab$ be a nondegenerate window of $R_k$.
By property (ii), $a$ is a flat or reflex vertex of ${\rm cl}(R_k)$, hence it is a convex vertex of $U_{ab}$. Let $ad$ denote the edge of $P$ incident to $a$ and on the boundary of $U_{ab}$. It is clear that $R_k$ does not weakly cover $ad$, and we show that $W_{ab}$ weakly covers it.
If $ab$ is saturated, it is clear that $W_{ab}$ weakly covers $ad$. If $ab$ is unsaturated,
then $U_{ab}$ and $c$ lie on opposite sides of the line spanned by $ab$, and so every point
in $c$ sees some part of $ad$ in a neighborhood of $a$. Consequently, $R_{k+1}$ weakly covers at least one new edge of $P$ behind every window of $R_k$.

For the second claim, assume that all windows of $R_k$ are saturated, but $\mu_{k+1}<n$. Then there is a saturated window $ab$ such that $R_{k+1}$ does not weakly cover all edges of $P$ in $U_{ab}$. As above, let $ad$ denote the edge of $U_{ab}$ incident to $a$, and also let $e$ denote the edge of $P$ that contains $b$ and has nontrivial intersection with the boundary of $U_{ab}$. From above, we know that $R_{k+1}$ weakly covers $ad$. Next, consider all chords of $P$ that cross $ab$ and are parallel to $ad$ or $e$. At least one of these chords has an endpoint in the relative interior of some edge of $P$ that is disjoint from $R_k$ and is not $ad$, and so $W_{ab}$ weakly covers at least two new edges of $P$ behind $ab$, as required.
\end{proof}
\begin{corollary}\label{cor:saturated}
For every $k\in \NN_0$,
\begin{itemize}\itemsep -1pt
\item[{\rm (1)}] We have $\mu_{k+1}\geq \min(\mu_k+1,n)$; and
\item[{\rm (2)}] If all windows of $R_k$ are saturated, then $\mu_{k+1}\geq \min(\mu_k+2,n)$.
\end{itemize}
\end{corollary}
\begin{proof}
Note that if ${\rm cl}(R_k)\neq P$, then $R_k$ has at least one window and $\lambda_k\geq 1$.
\end{proof}

\section{Counting Weakly Covered Edges in $R_k$}
\label{sec:upper}

Let $P$ be a simple polygon with $n$ vertices, and let $s\in {\rm int}(P)$. In this section, we establish
the inequality
\begin{equation}\label{eq:1}\tag{$\star$}
\mu_k\geq \min(2k+3,n) 
\end{equation}
for all $k \in \NN_0$, which immediately implies Theorem~\ref{thm:upper}. It is folklore that $V_0(s)$ weakly covers at least three edges, hence $\mu_0\geq 3$.
\begin{proposition}\label{pp:ini}
If $s\in {\rm int}(P)$, then $V_0(s)$ weakly covers at least three edges of $P$. Consequently, $\mu_0\geq 3$.
\end{proposition}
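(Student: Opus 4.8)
The plan is to bypass the structure of $V_0(s)$ entirely and argue instead about the directions of rays shot from $s$. For a unit vector $\theta\in S^1$, let $\rho_\theta$ be the ray emanating from $s$ in direction $\theta$. Since $s\in{\rm int}(P)$ and $P$ is bounded, $\rho_\theta$ meets $\partial P$; let $p(\theta)$ be the first such intersection point. I would first observe that whenever $\theta$ is not one of the (at most $n$) directions pointing from $s$ toward a vertex of $P$, the ray $\rho_\theta$ contains no vertex of $P$, so its first boundary point $p(\theta)$ lies in the relative interior of a uniquely determined edge $e(\theta)$ of $P$. Moreover the open segment from $s$ to $p(\theta)$ lies in ${\rm int}(P)$, so $p(\theta)$ is visible from $s$, i.e. $p(\theta)\in V_0(s)$; hence $V_0(s)$ weakly covers $e(\theta)$. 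It therefore suffices to show that the map $\theta\mapsto e(\theta)$ takes at least three distinct values.

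Next I would bound, for each fixed edge $e$ of $P$, the set $A_e=\{\theta\in S^1 : e(\theta)=e\}$. If $e(\theta)=e$, then $\theta$ is the direction from $s$ to the point $p(\theta)\in e$, so $A_e$ is contained in the set $D_e$ of all directions from $s$ toward points of the segment $e$. As a point traverses $e$ the direction from $s$ to it turns monotonically (or, if $s$ lies on the line supporting $e$, stays fixed), so $D_e$ is a single circular arc, possibly degenerate to a point; and since $s\in{\rm int}(P)$ cannot lie in the relative interior of $e$, the angle subtended by $e$ at $s$ is strictly less than $\pi$. Hence $|A_e|\le|D_e|<\pi$ for every edge $e$.

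Finally I would combine the two observations. The set of directions $\theta$ for which $e(\theta)$ is defined is co-finite in $S^1$ and is partitioned by the sets $A_e$ ranging over the finitely many edges $e$ that actually occur as some $e(\theta)$. Summing angular measures gives $\sum_e |A_e| = 2\pi$. If only one or two such edges existed, the sum would be at most $2\pi^-<2\pi$, a contradiction; so at least three distinct edges arise as some $e(\theta)$, and by the first paragraph each of them is weakly covered by $V_0(s)$.

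The only point that needs a little care is the claim that $\theta\mapsto e(\theta)$ is defined off a finite set. I would spell out that the first boundary hit $p(\theta)$ fails to lie in the relative interior of a single edge only if $p(\theta)$ is a vertex of $P$, which forces $\theta$ to point from $s$ straight at that vertex; there is no ``grazing'' subtlety, because $\rho_\theta$ passes through $s$, which lies on no edge, so $\rho_\theta$ meets the supporting line of any edge in at most one point. Everything else is elementary planar geometry; note in particular that the general-position hypothesis on $P$ and $s$ is convenient for the sequel but is not needed here.
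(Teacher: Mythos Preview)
Your proof is correct and takes a different route from the paper's. The paper forms the intersection $C$ of the halfplanes $H(e)$, one for each edge $e$ weakly visible from $s$ (where $H(e)$ is bounded by the line through $e$ and contains $s$), observes that $C$ is convex and contains $s$, shows that $C$ is bounded (else an unbounded ray from $s$ inside $C$ would hit the relative interior of a further weakly visible edge $f$, and $H(f)$ would cut off the ray), and concludes via ``a bounded convex polygon has at least three sides.'' Your angular-measure argument---each edge subtends at $s$ an arc of length strictly less than $\pi$, and the first-hit arcs $A_e$ cover all of $S^1$ except finitely many directions---reaches the same bound by pure arithmetic, with no halfplane construction or convexity step; it also makes the threshold ``three'' transparent ($2\cdot\pi$ just fails to reach $2\pi$). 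Both proofs are elementary and neither needs the general-position hypothesis. One small wording issue in your last paragraph: $s$ lying on no \emph{edge} does not rule out $s$ lying on the supporting \emph{line} of an edge, so $\rho_\theta$ can be collinear with an edge; but this contributes at most $2n$ additional exceptional directions, so your co-finiteness claim is unaffected.
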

\begin{proof}
In any triangulation of $P$, $s$ lies in some triangle whose vertices partition the edges of $\partial P$ into three sets. At least one edge is seen by $s$ in each of the three sets.
\end{proof}

We prove \eqref{eq:1} for all $k\in \NN_0$ by induction on $k$. Recall that $R_0$ satisfies \eqref{eq:1} by Proposition~\ref{pp:ini}, and $\mu_k$ strictly monotonically increases until it reaches $n$ by Corollary~\ref{cor:saturated}. Consequently, if \eqref{eq:1} fails for some $R_{k+1}$, $k\in \NN_0$, then $R_k$ must satisfy \eqref{eq:1} with equality, and $\mu_k<n$. This motivates the following definition.
A region $R_k$ is called \emph{critical} if $\mu_k=2k+3$ and $\mu_k<n$.

By Lemma~\ref{lem:saturated}, it is enough to show that whenever $R_k$ is critical, then $\lambda_k\geq 2$ or all windows of $R_k$ are saturated. For every critical region $R_k$, we will inductively show (Lemma~\ref{lem:induction}(3)) 
that one of the following two conditions holds:
\begin{enumerate}\itemsep -1pt
\item[{\rm \bf (A)}] All windows of $R_k$ are saturated;  or
\item[{\rm \bf (B)}] $R_k$ has an unsaturated window and $\lambda_k \geq 2$.
\end{enumerate}
Note that these two conditions are mutually exclusive, that is, a region $R_k$ cannot satisfy both.

\paragraph{Initialization}
We first show that $R_0$ satisfies one of the two conditions.

\begin{proposition}\label{pp:ini1}
Region $R_0$ satisfies condition~(A) or~(B).
\end{proposition}
\begin{proof}
First, suppose $\lambda_0 \geq 2$. Then either all windows are saturated so (A) holds, or at least one window is unsaturated and (B) holds.

Next, suppose $\lambda_0=1$. If the single window of $R_0$ is degenerate, then (A) holds. Assume that $R_0$ has exactly one window that is nondegenerate, denoted by $ab$ as defined in (ii). Recall that $R_0=V_0(s)$, and so every point in $\partial R_0$ is contained in a window or directly visible from $s$. Consequently, all points in $\partial R_0\setminus ab$ are in $R_0$. As $ab$ splits $P$ into $R_0$ and $U_{ab}$, every chord of $P$ that crosses $ab$ has exactly one endpoint in $\partial R_0\setminus ab$ and so has one endpoint in $R_0$, as desired. It follows that window $ab$ is saturated, and so $R_0$ satisfies condition~(A).
\end{proof}

We will also inductively show (Lemma~\ref{lem:induction}) that no two consecutive critical regions satisfy (B). For the first two regions, $R_0$ and $R_1$, this is established as follows.

\begin{proposition}\label{pp:ini2}
If both $R_0$ and $R_1$ are critical and $R_0$ satisfies (B), then $R_1$ satisfies (A).
\end{proposition}
\begin{proof}
If both $R_0$ and $R_1$ are critical, then $\mu_0=3$ and $\mu_1=5$. By the proof of Proposition~\ref{pp:ini1}, $R_0$ must have at least two disjoint windows, otherwise it satisfies (A) instead of (B). Additionally $R_0$ must have at most two windows, as otherwise $\lambda_0 \geq 3$ and so $\mu_1 > 5$ by Lemma~\ref{lem:saturated}(1), a contradiction.

Denote by $a_1b_1$ and $a_2b_2$ the two disjoint windows of $R_0$. By assumption at least one of these two disjoint windows is unsaturated, and it follows that the other must be either unsaturated or saturated and degenerate.
The boundary of $R_0$ consists of five line segments: window $a_1b_1$, window $a_2b_2$, and three segments along three edges of $P$ weakly covered by $R_0$.

Since $a_1b_1$ and $a_2b_2$ are disjoint and thus not adjacent, they are both incident to some edge $e$ of $P$ weakly covered by $R_0$. By Property~(ii), we may assume that $a_1$ and $a_2$ are reflex vertices of ${\rm cl}(R_0)$. If $a_1$ or $a_2$ is incident to $e$, then both windows are saturated, contradicting our earlier observation. Therefore, neither $a_1$ nor $a_2$ is incident to $e$, hence both $b_1$ and $b_2$ lie on $e$.
Then $R_1={\rm cl}(R_0)\cup V_0(c_1)\cup V_0(c_2)$, where $c_1,c_2\subseteq e$. Since $R_1$ weakly covers precisely one new edge of $P$ behind each window, every window of $R_1$ is collinear with $e$. It follows that a chord of $P$ that crosses any window of $R_1$ cannot have an endpoint on edge $e$. Therefore one endpoint of such a chord is in $R_1={\rm cl}(R_0)\cup V_0(c_1)\cup V_0(c_2)$, and all windows of $R_1$ are saturated. As desired, $R_1$ satisfies (A).
\end{proof}

The next proposition explores the case where $R_{k+1}$ is critical for some $k\geq 1$, but the previous region $R_{k}$ satisfies \eqref{eq:1} with a strict inequality so is not critical.

\begin{proposition}\label{pp:crit}
Suppose that $R_k$ is not critical and $\mu_k> 2k+3$, but $R_{k+1}$ is critical.
Then $\lambda_k =
  1$, region $R_k$ has an unsaturated window,
and $R_{k+1}$ satisfies (A).
\end{proposition}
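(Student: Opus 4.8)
The plan is to combine the counting inequalities~\eqref{eq:0} and~\eqref{eq:1} (using as inductive hypothesis that~\eqref{eq:1} holds for $R_k$) with a structural description of $R_k$ and $R_{k+1}$. First I would pin down the arithmetic. Since $R_{k+1}$ is critical, $\mu_{k+1}=2k+5<n$, so $\mu_k\le\mu_{k+1}<n$ by monotonicity; then~\eqref{eq:1} forces $\mu_k\ge 2k+3$, and since $R_k$ is noncritical with $\mu_k<n$ we get $\mu_k\neq 2k+3$, so $2k+4\le\mu_k\le 2k+5$. The value $\mu_k=2k+5$ is impossible, for then $\mu_{k+1}=\mu_k$, so~\eqref{eq:0} gives $\lambda_k=0$, hence ${\rm int}(P)\subseteq R_k$ and $\mu_{k+1}=n$, a contradiction. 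Thus $\mu_k=2k+4$ and $\mu_{k+1}=\mu_k+1$; now~\eqref{eq:0} yields $\lambda_k\le\mu_{k+1}-\mu_k=1$, while $\lambda_k\ge 1$ (otherwise ${\rm int}(P)\subseteq R_k$ and $\mu_{k+1}=n$), so $\lambda_k=1$. Let $ab$ be the unique window of $R_k$, with $a$ reflex in $P$ and $b$ in the relative interior of $e_{ab}$, and let $ad$ be the edge of $P$ bounding $U_{ab}$ at $a$.

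Second, I would show $ab$ is unsaturated. Since $\lambda_k=1$, the boundary of $R_k$ is the chord $ab$ together with a single arc of $\partial P$ from $a$ to $b$, so ${\rm cl}(R_k)$ and $U_{ab}$ are precisely the two pieces into which the chord $ab$ cuts $P$; in particular ${\rm cl}(R_k)\cup U_{ab}=P$. If $ab$ were saturated, then $W_{ab}=V_0(ab)\cap U_{ab}$, and Proposition~\ref{pp:saturated} gives one of two cases: either $U_{ab}\not\subseteq V_0(ab)$, so $W_{ab}$ weakly covers at least two new edges and $\mu_{k+1}\ge\mu_k+2=2k+6$, contradicting $\mu_{k+1}=2k+5$; or $U_{ab}\subseteq V_0(ab)$, so by the proof of that proposition $U_{ab}=\Delta(abd)$ and $W_{ab}=U_{ab}$, whence $R_{k+1}\supseteq{\rm cl}(R_k)\cup U_{ab}=P$ and $\mu_{k+1}=n$, contradicting $\mu_{k+1}<n$. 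Hence $ab$ is unsaturated and $W_{ab}=V_0(c)\cap U_{ab}$ for a point $c\in R_k\cap\partial P$ close to $b$ on $e_{ab}$ with no line through two vertices of $P$ separating $b$ from $c$.

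Third, I would establish $\lambda_{k+1}=1$ and then the saturation of the resulting window. Since $c$ lies in the relative interior of $e_{ab}$, where $P$ is locally a halfplane, $c$ (like $b$) sees the whole relative interior of $ab$, so $W_{ab}$ contains it; as ${\rm cl}(R_k)$ contains $ab$ from the other side, the relative interior of $ab$ lies in ${\rm int}(R_{k+1})$, so $ab$ is no longer a window and the windows of $R_{k+1}$ are exactly those of $W_{ab}=V_0(c)\cap U_{ab}$. The edges of $P$ met by $\partial U_{ab}$ are $ad$, $e_{ab}$, and the edges strictly between $d$ and the far endpoint of $e_{ab}$; of these, $R_k$ weakly covers only $e_{ab}$, and by Proposition~\ref{pp:unsaturated} together with $\mu_{k+1}=\mu_k+1$, the set $W_{ab}$ weakly covers $ad$ and no edge of $U_{ab}$ besides $ad$ and $e_{ab}$. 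Since $\mu_{k+1}<n$, some edge of $P$ remains uncovered by $R_{k+1}$, and it must lie strictly between $d$ and the far endpoint of $e_{ab}$; hence the edges of $\partial U_{ab}$ invisible from $c$, together with the portions of $ad$ and $e_{ab}$ hidden from $c$, form one nonempty connected sub-arc of $\partial U_{ab}$, whose complement in $\partial U_{ab}$ is a single connected arc visible from $c$. A visibility region with a single maximal invisible boundary arc has exactly one window, so $\lambda_{k+1}=1$. Finally, with $\lambda_{k+1}=1$ the reasoning of the previous paragraph applies verbatim to $R_{k+1}$ (compare the $\lambda=1$ case in the proof of Proposition~\ref{pp:ini1}): its boundary is the unique window plus one arc of $\partial P$ lying in $R_{k+1}$ except for the window's landing point, so every chord crossing that window has an endpoint in $R_{k+1}$, i.e.\ the window of $R_{k+1}$ is saturated.

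The main obstacle is the step $\lambda_{k+1}=1$, that is, that $W_{ab}=V_0(c)\cap U_{ab}$ has exactly one window. This requires (i) turning the sharp count $\mu_{k+1}=\mu_k+1$ into a precise statement about which edges of $\partial U_{ab}$ are met by $W_{ab}$; (ii) checking that the edges of $U_{ab}$ untouched by $W_{ab}$, together with the portions of $ad$ and $e_{ab}$ hidden from $c$, do form a single connected sub-arc of $\partial U_{ab}$ rather than several; and (iii) verifying that the resulting chord satisfies the structural properties (i)--(iv) for $R_{k+1}$, in particular that its non-reflex endpoint lies in the relative interior of an edge of $P$ --- which is precisely where the genericity of $c$ is used. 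A subsidiary point is the passage from $V_0(c)\cap U_{ab}$ to visibility within $U_{ab}$, harmless here because any segment from $c\in\partial P$ to a point of $U_{ab}$ crosses the chord $ab$ at most once.
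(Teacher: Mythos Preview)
Your arithmetic in the first paragraph and the argument that $ab$ must be unsaturated are correct and match the paper's reasoning (with more detail). Your topological argument for $\lambda_{k+1}=1$ via ``one connected invisible arc'' is also valid, and is a legitimate alternative to the paper's explicit identification of the window.

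The gap is in your final step, the saturation of the window of $R_{k+1}$. You argue that since $\lambda_{k+1}=1$, the boundary of $R_{k+1}$ is the window plus one arc of $\partial P$ ``lying in $R_{k+1}$ except for the window's landing point,'' and conclude saturation as in Proposition~\ref{pp:ini1}. But this assertion is false: the portion of $e_{ab}$ strictly between $b$ and the far endpoint $a'$ of $e_{ab}$ lies on $\partial R_{k+1}$ yet is \emph{not} in $R_{k+1}$. Indeed, these points are on the $U_{ab}$ side of $ab$, hence not in ${\rm cl}(R_k)$, and since $c$ lies on the same edge $e_{ab}$, the segment from $c$ to any such point runs along $\partial P$ and is not a valid visibility segment, so they are not in $W_{ab}=V_0(c)\cap U_{ab}$ either. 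In fact, the implication ``$\lambda=1\Rightarrow$ saturated'' cannot hold in general: you yourself have just shown that $R_k$ has $\lambda_k=1$ with an \emph{unsaturated} window, so the analogy with Proposition~\ref{pp:ini1} (which relies on $R_0=V_0(s)$ being a visibility region of a single interior point) does not transfer.

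What rescues saturation here is precisely the geometric information your topological argument discards. The paper identifies the window of $R_{k+1}$ explicitly as $a'b'$, where $a'$ is the endpoint of $e_{ab}$ in $U_{ab}$ and $b'$ is the point where the supporting line of $e_{ab}$ meets $ad$; in particular, $a'b'$ is collinear with $e_{ab}$. Consequently, no chord of $P$ emanating from a point of $e_{ab}$ can cross $a'b'$ (a straight segment leaving the line of $e_{ab}$ cannot return to it), so the ``bad'' boundary segment between $b$ and $a'$ is irrelevant, and every chord crossing $a'b'$ has its near endpoint either in the relative interior of $ab'\subset ad$ (which is in $W_{ab}$) or in ${\rm cl}(R_k)$. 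To repair your proof, you need to extract from your connected-arc argument the location of the window---namely, that one endpoint is $a'$ and the window lies along the line of $e_{ab}$---and then use that collinearity for the saturation step.
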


\begin{proof}
Because $\mu_k \geq 2k+4$ and $\mu_{k+1} = 2(k+1)+3 = 2k+5$, we have $\mu_{k+1}\leq \mu_k+1$. By Corollary~\ref{cor:saturated}(1), we have $\mu_k = 2k+4$ and $\lambda_k=1$, that is, $R_k$ has exactly one window.

\begin{figure}[htb]
\centering
\includegraphics[width=0.95\textwidth]{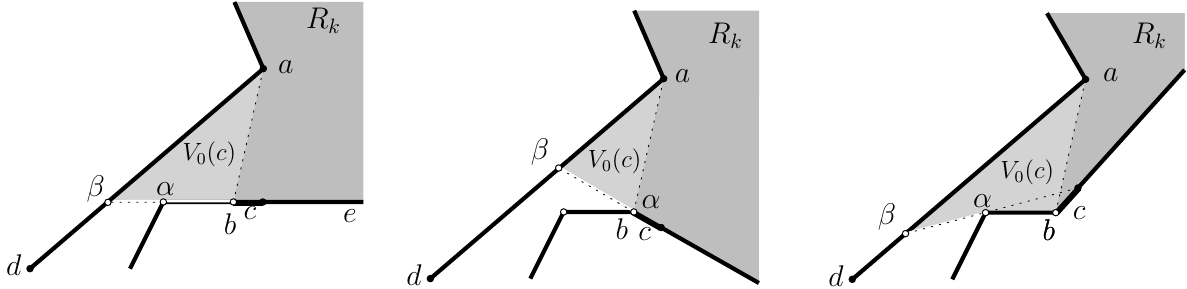}
\caption{The situation in Proposition~\ref{pp:crit}.
The region $R_k$ is noncritical and $R_{k+1}$ is critical. The window $ab$ of $R_k$ is unsaturated,
while the  window $\alpha\beta$ of $R_{k+1}$ is saturated.
Left: $b$ lies in the relative interior of an edge of $P$.
Middle: $b$ is a reflex vertex of $P$.
Right: $b$ is a convex vertex of $P$.}
\label{fig:prop-crit}
\end{figure}

The window $ab$ cannot be saturated by Corollary~\ref{cor:saturated}(2). Since $R_{k+1}$ is critical, we have $\mu_{k+1} < n$ and thus $R_{k+1}$ also has at least one window, which must be within $U_{ab}$.
Because $ab$ is unsaturated, $R_{k+1}={\rm cl}(R_k)\cup W_{ab}$, where $W_{ab}=V_0(c)\cap U_{ab}$ for a segment $c$ described in (iii). Since $\mu_{k+1}=\mu_k+1$, the region $R_{k+1}$ weakly covers precisely one more edge than $R_k$.
Let $ad$ be the edge of $P$ incident to $a$ lying on the boundary of $U_{ab}$. Refer to Fig.~\ref{fig:prop-crit}. As argued in the proof of Lemma~\ref{lem:saturated}, $ad$ is the only edge weakly covered by $W_{ab}$ but not weakly covered by $R_k$. We distinguish between two cases to define a point $\beta\in ad$.

{\bf Case~1: $b\in {\rm relint}(e)$ for some edge $e$ of $P$} (Fig.~\ref{fig:prop-crit}, left).
Then $c\subset {\rm relint}(e)$. Since $ad$ is the only edge in $U_{ab}$ visible from $c$,
the supporting line of $e$ intersects $ad$, and we denote the intersection point by $\beta$.

{\bf Case~2: $b$ is a vertex of $P$.} In this case, $b$ must be a reflex vertex of $P$ (as in Fig.~\ref{fig:prop-crit}, middle), otherwise $c$ would also see the edge of $U_{ab}$ incident to $b$ (as in Fig.~\ref{fig:prop-crit}, right). Since $ad$ is the only edge in $U_{ab}$ visible from $c$,
the supporting line of $c$ intersects $ad$, and we denote the intersection point by $\beta$.

In both cases, we have $W_{ab}=\Delta(ab\beta)\setminus b\beta$, and so any window of $R_{k+1}$ is contained in $b\beta$. Every chord of $P$ that crosses $b\beta$ has an endpoint in either ${\rm relint}(a\beta)$ or in ${\rm cl}(R_k)$. In either case, one endpoint of such a chord is in $R_{k+1}$, and so all windows of $R_{k+1}$ are saturated.
\end{proof}

\paragraph{Induction Step}
The next three propositions concern the situation where several consecutive regions are critical.

\begin{proposition}\label{pp:1}
If all windows of both $R_k$ and $R_{k+1}$ are collinear, and $R_k$ satisfies (A), then $R_{k+1}$ also satisfies (A).
\end{proposition}

\begin{proof}
Let $\alpha\beta$ be an arbitrary window of $R_{k+1}$. Then $\alpha\beta$ lies on the boundary of some visibility region $W_{ab}$, where $ab$ is a window of $R_k$. Since $ab$ is saturated, we have $W_{ab}=V_0(ab)\cap U_{ab}$.
Consider a chord $cd$ of $P$ that crosses $\alpha\beta$ with $d\in U_{\alpha\beta}$.
We need to show that $c\in R_{k+1}$. If $cd$ crosses $ab$, then $c\in R_k\subseteq R_{k+1}$ since $ab$ is saturated. If $c=a$ or $c=b$, then $c\in R_{k+1}$ since $a,b\in {\rm cl}(R_k)\subseteq R_{k+1}$. Otherwise, $c\in \partial W_{ab}\setminus (ab\cup \alpha\beta)$. Since $ab$ is saturated, all points of $\partial W_{ab}\setminus W_{ab}$ are in windows of $R_{k+1}$, which are collinear with $\alpha\beta$, so $c$ must lie in $W_{ab}$. In all cases, $c\in R_{k+1}$, and so the window $\alpha\beta$ is saturated.
\end{proof}

\begin{proposition} \label{pp:A}
If $R_k$ and $R_{k+1}$ are critical and $R_k$ satisfies~(A), then $R_{k+1}$ satisfies (A) or (B).
\end{proposition}
\begin{proof}
First, note that $\mu_k = 2k + 3$ and $\mu_{k+1} = 2k+5$ by criticality. Note that $R_k$ has at least one window, otherwise $\mu_{k+1}=n$, contradicting the criticality of $R_{k+1}$. If $\lambda_k\geq 2$, then by Lemma~\ref{lem:saturated}(2), $\mu_{k+1} \geq (2k + 3) + 2 + 1 > 2k + 5$, which is a contradiction. It follows that $R_k$ has exactly one window that is nondegenerate.

If $\lambda_{k+1}\geq 2$, then $R_{k+1}$ satisfies (A) or (B): either all windows are saturated and (A) holds, or it has an unsaturated window and (B) holds. If $\lambda_{k+1}\leq 1$, then any window of $R_{k+1}$ is saturated by Proposition~\ref{pp:1} and so (A) holds.
\end{proof}

\begin{proposition}\label{pp:prop-crit-three-step}
Suppose that $R_k$, $R_{k+1}$, and $R_{k+2}$ are critical, $R_k$ satisfies~(A), and $R_{k+1}$ satisfies~(B).
Then $R_{k+2}$ satisfies~(A).
\end{proposition}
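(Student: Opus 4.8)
The plan is to follow how the two windows $f_1,f_2$ of $R_{k+1}$ are absorbed when $R_{k+2}$ is built, and to show that each of them either vanishes together with its whole pocket or is replaced by exactly one saturated window. Criticality forces $\mu_k=2k+3$, $\mu_{k+1}=2k+5$, $\mu_{k+2}=2k+7$, all strictly less than $n$; since $\mu_k<n$ we have ${\rm int}(P)\not\subseteq R_k$, and $\lambda_k=1$ gives $R_{k+1}={\rm cl}(R_k)\cup W_{ab}$ for the unique window $ab$ of $R_k$, so $f_1=a_1b_1$ and $f_2=a_2b_2$ both lie in $U_{ab}$. For $i\in\{1,2\}$ write $f_i=a_ib_i$ with $a_i$ the reflex vertex (property~(iii)), and let $a_id_i$ be the edge of $P$ at $a_i$ on the $U_{f_i}$ side. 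By Propositions~\ref{pp:saturated} and~\ref{pp:unsaturated}, $W_{f_i}$ weakly covers $a_id_i$, while $R_{k+1}$ does not; and by properties~(i) and~(ii) the pockets $U_{f_1},U_{f_2}$ have disjoint interiors, so $a_1d_1\neq a_2d_2$ (an edge of $\partial P$ borders only one of two interior-disjoint subpolygons of $P$). As $R_{k+2}={\rm cl}(R_{k+1})\cup W_{f_1}\cup W_{f_2}$ and $\mu_{k+2}-\mu_{k+1}=2$, it follows that $W_{f_i}$ weakly covers exactly one new edge, namely $a_id_i$.

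Next I would case on the saturation type of each $f_i$. If $f_i$ is saturated, then since it contributes only one new edge the second part of Proposition~\ref{pp:saturated} forces $U_{f_i}\subseteq V_0(f_i)$ with $U_{f_i}$ a triangle, hence $W_{f_i}=U_{f_i}$ and $f_i$ creates no window of $R_{k+2}$. If both $f_1$ and $f_2$ were saturated we would then get $R_{k+2}={\rm cl}(R_{k+1})\cup U_{f_1}\cup U_{f_2}=P$, since the two pockets exhaust $P\setminus{\rm cl}(R_{k+1})$ ($R_{k+1}$ has no other window); this yields $\mu_{k+2}=n$, contradicting criticality of $R_{k+2}$. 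Hence at least one $f_i$ is unsaturated, and a saturated $f_i$, if present, contributes nothing to $\partial R_{k+2}$.

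If $f_i$ is unsaturated, then $W_{f_i}=V_0(c_i)\cap U_{f_i}$ for a point $c_i\in e_{f_i}\cap R_{k+1}$ chosen as in the construction, and, since $W_{f_i}$ weakly covers only the single new edge $a_id_i$, the reasoning in the proof of Proposition~\ref{pp:crit} applies verbatim: the supporting line of $e_{f_i}$ meets $a_id_i$ at a point $b_i'$, $W_{f_i}=\Delta(a_ic_ib_i')\setminus c_ib_i'$, and $R_{k+2}$ acquires a single window $a_i'b_i'$ inside $U_{f_i}$, where $a_i'$ is the endpoint of $e_{f_i}$ on $b_ib_i'$. This window is saturated because every chord of $P$ crossing it has an endpoint either in the relative interior of $a_ib_i'$ (a subsegment of $a_id_i$, now contained in $R_{k+2}$) or in ${\rm cl}(R_{k+1})\subseteq R_{k+2}$. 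Combining the cases, every window of $R_{k+2}$ equals $a_i'b_i'$ for some unsaturated window $f_i$ of $R_{k+1}$, and all such windows are saturated, which is the claim.

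The main obstacle I anticipate is the counting step: establishing that each $W_{f_i}$ covers only the one edge $a_id_i$. This hinges on the pockets $U_{f_1},U_{f_2}$ being interior-disjoint (hence $a_1d_1\neq a_2d_2$) and on no pocket hiding further uncovered edges, and it is also where I must check that the argument in the proof of Proposition~\ref{pp:crit}, originally stated with $R_k$ noncritical, transfers unchanged to the present hypotheses. The remaining case analysis is routine, resting only on Propositions~\ref{pp:saturated}, \ref{pp:unsaturated}, and~\ref{pp:crit}.
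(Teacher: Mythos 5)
Your counting step is fine (each $W_{f_i}$ weakly covers exactly the one new edge $a_id_i$, and your remark that both windows being saturated would force $R_{k+2}=P$, contradicting criticality, is correct), but the unsaturated case contains a genuine gap. The saturation argument in Proposition~\ref{pp:crit} is not local to the pocket: it uses $\lambda_k=1$ in an essential way, namely that $P={\rm cl}(R_k)\cup U_{ab}$, so that a chord crossing the new window either terminates on the newly covered piece of $ad$ or, after leaving the pocket through the old window, must terminate in ${\rm cl}(R_k)$. In your situation $\lambda_{k+1}=2$, so $P\setminus U_{f_1}={\rm cl}(R_{k+1})\cup U_{f_2}$: a chord crossing your new window $a_1'b_1'$ can cross the old window $f_1$, traverse ${\rm cl}(R_{k+1})$, cross $f_2$, and end at a boundary point hidden in the not-yet-covered part of $U_{f_2}$; then neither endpoint lies in the relative interior of $a_1b_1'$ nor in ${\rm cl}(R_{k+1})$, and your sentence ``every chord of $P$ crossing it has an endpoint either in the relative interior of $a_ib_i'$ or in ${\rm cl}(R_{k+1})$'' is unjustified. (In the mixed case, where the other window is saturated and its pocket is entirely absorbed, this is harmless because then $P\setminus U_{f_1}\subseteq R_{k+2}$; the problem is exactly the case in which both $f_1,f_2$ are unsaturated.)

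Closing this hole is where the paper uses global information that your per-window treatment never establishes. The paper analyzes $\partial W_{ab}$, which consists of $ab$, the two windows of $R_{k+1}$, and parts of only three edges $e_{ab}$, $ad$, and one further edge $e$, and shows that in the relevant case (neither reflex endpoint $a_1',a_2'$ incident to $e$) the feet of both windows lie on the \emph{same} edge $e$; consequently every window of $R_{k+2}$ is collinear with $e$. A chord crossing such a window meets the supporting line of $e$ only once, so its other endpoint cannot lie behind any window of $R_{k+2}$ and hence lies in $R_{k+2}$ — that is what makes saturation survive the presence of two pockets. You explicitly flag the need to check that Proposition~\ref{pp:crit} ``transfers unchanged,'' but that check is the entire content of the hard case and it fails as stated; to repair your argument you would need to prove $e_{f_1}=e_{f_2}$ (or an equivalent collinearity statement for the new windows) before invoking the chord argument.
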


\begin{proof}
Note that $\mu_k=2k+3$, $\mu_{k+1}=\mu_k+2$, and $\mu_{k+2}=\mu_{k+1}+2$.
By Lemma~\ref{lem:saturated}(2), $R_k$ has only one window, which is saturated; label this window $ab$ as described in Property (ii). Recall that in this case, $W_{ab} = V_0(ab)\cap U_{ab}$. We have $\lambda_{k+1}=2$, as condition (B) yields $\lambda_{k+1}\geq 2$ and Lemma~\ref{lem:saturated}(1) implies $\lambda_{k+1}\leq 2$. However, $R_{k+1}$ cannot have two adjacent collinear windows, otherwise both of these windows would be saturated by Proposition~\ref{pp:1}, and $R_{k+1}$ would satisfy (A). By property (iv), $R_{k+1}$ has exactly two windows that are disjoint.
We denote them $\alpha_1\beta_1$ and $\alpha_2\beta_2$, respectively.
Refer to Fig.~\ref{fig:prop-crit-three-step}.

By Property~(ii), we may assume that $a$ is a reflex vertex or a straight vertex of ${\rm cl}(R_k)$. Denote by $ad$ the edge of $P$ on the boundary of $U_{ab}$ and incident to $a$. Since $\mu_{k+1} = \mu_k+2$, region $R_{k+1}$ weakly covers precisely two new edges of $P$: one is $ad$ and call the other $e$.
Let $f$ be the edge of $P$ containing $b$ and weakly covered by $U_{ab}$. Note that $\alpha_1\beta_1$ and $\alpha_2\beta_2$ are disjoint and lie on the boundary of $W_{ab}$. Since $ad$ and $f$ can each be incident to at most one of them, we have $e\neq f$, and edge $e$ is incident to both $\alpha_1\beta_1$ and $\alpha_2\beta_2$.

The boundary of $W_{ab}$ is formed by segments $ab$, $\alpha_1\beta_1$, and $\alpha_2\beta_2$, and some part of the edges $ad$, $e$, and possibly $f$. Note that $ad\subset \partial W_{ab}$ since a point in $ab$ sufficiently close to $a$ can see all of $ad$. However, the intersection $f\cap \partial W_{ab}$ could be the single point $b$, or a nontrivial line segment connecting $b$ and an endpoint of $f$. Without loss of generality, we may assume that $\partial W_{ab}$ contains, in counterclockwise order, $\alpha_1\beta_1$, part of $e$,  $\alpha_2\beta_2$, $ad$, $ab$, and possibly part of $f$, but contains no other segments.

\begin{figure}[htb]
\centering
\includegraphics[width=0.9\textwidth]{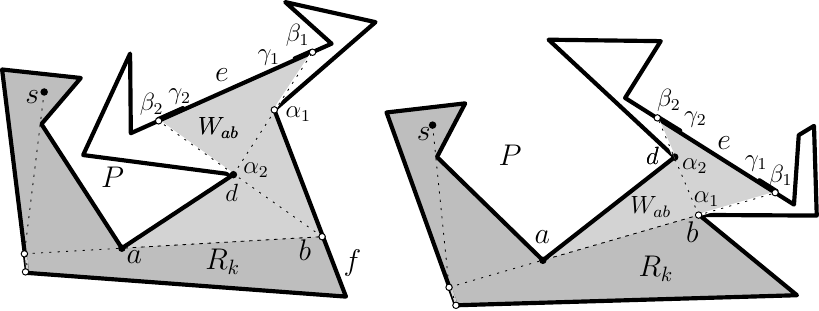}
\caption{The situation in Proposition~\ref{pp:prop-crit-three-step}. The region $R_k$ has a unique window $ab$, and $W_{ab}$ weakly covers two new edges of $P$: $ad$ and $e$. $R_{k+1}$ has two windows $\alpha_1\beta_1$ and $\alpha_2\beta_2$.
Left: $b$ lies in the relative interior of an edge $f$ of $P$.
Right: $b$ is a vertex of $P$ and $b=\alpha_1$.}
\label{fig:prop-crit-three-step}
\end{figure}

Since $\mu_{k+2}=\mu_{k+1}+2$, region $R_{k+2}$ weakly covers precisely one new edge of $P$ behind each of the two windows $\alpha_1\beta_1$ and $\alpha_2\beta_2$. It follows that $R_{k+2}$ has at most two windows: at most one behind each of $\alpha_1\beta_1$ and $\alpha_1\beta_2$.

By Property~(ii) and disjointness of $W_{ab}$'s two windows, we may assume that $\alpha_1$ and $\alpha_2$ are reflex vertices of ${\rm cl}(R_{k+1})$. Since $W_{ab}=V_0(ab)\cap U_{ab}$, the region $W_{ab}$ has a reflex or flat interior angle at both $\alpha_1$ and $\alpha_2$. We have $\alpha_1\in f$ (possibly $\alpha_1=b$), and $\alpha_2=d$.
Consequently, both $\beta_1$ and $\beta_2$ are contained in $e$. A segment in $P$ can connect two points in
${\rm relint}(\alpha_1\beta_1)$ and ${\rm relint}(\alpha_2\beta_2)$, respectively.
Therefore, $\alpha_1\beta_1$ and $\alpha_2\beta_2$ are unsaturated or degenerate windows.

Let $\gamma_1$ and $\gamma_2$, respectively, be the segments $c$ described in (iii) for the windows $\alpha_1\beta_1$ and $\alpha_2\beta_2$. Note that both $\gamma_1$ and $\gamma_2$ are in ${\rm relint}(e)$. By construction, $R_{k+2}={\rm cl}(R_{k+1})\cup V_0(\gamma_1)\cup V_0(\gamma_2)$. Since $R_{k+2}$ weakly covers exactly one new edge of $P$ behind each of $\alpha_1\beta_1$ and $\alpha_2\beta_2$, every window of $R_{k+2}$ is collinear with $e$. It follows that a chord of $P$ that crosses any window of $R_{k+2}$ cannot have an endpoint on the edge $e$, which contains the only uncovered portions of $\partial R_{k+2}$ that are on the boundary of $P$. Therefore one endpoint of such a chord is in $R_{k+2}={\rm cl}(R_{k+1})\cup V_0(\gamma_1)\cup V_0(\gamma_2)$.
Consequently, any window of $R_{k+2}$ is saturated.
\end{proof}

We are now in position to prove Lemma~\ref{lem:induction}. We note that claim~(1) of the lemma is the statement we want to prove, and we are able to do this by establishing a stronger induction argument also maintaining claims~(2) and~(3).
\begin{lemma}\label{lem:induction}
For all $k \in \mathbb{N}_0$,
\begin{itemize}\itemsep 0pt
\item[{\rm (1)}] $\mu_{k} \geq \min (2k + 3, n)$;
\item[{\rm (2)}] If $R_k$ is critical, it satisfies (A) or (B); and
\item[{\rm (3)}] If $R_k$ is critical and satisfies (B), then either $R_{k-1}$ is critical and satisfies (A), or $k=0$.
\end{itemize}
\end{lemma}
\begin{proof}
First, suppose $k = 0$. Then $\mu_0 \geq 3$ by Proposition~\ref{pp:ini}, satisfying (1). By Proposition~\ref{pp:ini1}, $R_0$ satisfies (A) or (B), proving (2). Claim (3) trivially holds for $k = 0$.


For the inductive step, suppose $k \geq  1$ and that (1), (2), and (3) hold for all smaller $k$. First, we establish (1). If $R_{k-1}$ is critical, then by the induction hypothesis it must satisfy (A) or (B). By criticality, we have $\mu_{k-1} = 2k+1$, and Lemma~\ref{lem:saturated} yields $\mu_k \geq \min(\mu_k + 2,n)= \min(2k+3,n)$. If $R_{k-1}$ is not critical, then $\mu_{k-1} \geq \min(2k+1,n)$ by the induction hypothesis and $\mu_{k-1} \neq 2k+1$ by the definition of criticality. Consequently $\mu_{k-1} \geq \min(2k+2,n)$, and Corollary~\ref{cor:saturated}(1) yields $\mu_k \geq \mu_{k-1} + 1 \geq \min(2k+3,n)$, proving (1).

To establish (2) and (3), suppose $R_k$ is critical. If $R_{k-1}$ is not critical, then $\mu_{k-1} \geq 2k+2$ from the discussion above. Applying Proposition~\ref{pp:crit} (for $k-1$ instead of $k$), it follows that $R_k$ satisfies (A). If $R_{k-1}$ is critical and satisfies (A), then $R_k$ satisfies (A) or (B) by Proposition~\ref{pp:A}.
It remains to consider the case that $R_{k-1}$ is critical and satisfies (B).

Suppose that both $R_k$ and $R_{k-1}$ are critical and $R_{k-1}$ satisfies (B).
Claim (3) implies (for $k-1$ instead of $k$) that either $k=1$ or $R_{k-2}$ is critical and satisfies (A). If $k=1$, then $R_k$ satisfies (A) by Proposition~\ref{pp:ini2}.
If $R_{k-2}$ is critical and satisfies (A), we apply Proposition~\ref{pp:prop-crit-three-step} (for $k-2$ instead of $k$) and conclude that $R_k$ satisfies (A). In all cases, $R_k$ satisfies (A) or (B), proving (2). If $R_k$ satisfies (B), then $R_{k-1}$ satisfies (A), proving (3).
\end{proof}

We can now finally prove Theorem~\ref{thm:upper}.

\begin{customthm}{\ref{thm:upper}}
We have $D(n)= \lfloor n/2\rfloor - 1$ for every integer $n\geq 3$.
\end{customthm}

\begin{proof}
We prove that in every simple polygon $P$ with $n\geq 3$ vertices,
there exists a diffuse reflection path with at most $\lfloor n/2\rfloor-1$
reflections between any two points $s,t\in {\rm int}(P)$.
It is enough to show that ${\rm int}(P) \subseteq V_k(s)$  for every
$s\in {\rm int}(P)$ and every $k\geq \lfloor n/2\rfloor-1$.

Note that $\lfloor n/2\rfloor -1= \lceil (n-3)/2\rceil$.
By Lemma~\ref{lem:induction}, $\mu_{\lfloor n/2\rfloor-1} \geq 2(\lfloor n/2\rfloor-1) + 3 \geq n$, so
$R_{\lfloor n/2\rfloor-1}$ weakly covers all edges of $P$.
It follows that $R_{\lfloor n/2\rfloor-1}$ does not have any window, otherwise Lemma~\ref{lem:saturated} would imply that $P$ has an edge that is not weakly covered. Therefore ${\rm int}(P)\subseteq R_{\lfloor n/2\rfloor-1}$, as claimed. Corollary~\ref{cor:rk-in-vk} then implies ${\rm int}(P) \subseteq V_{\lfloor n/2\rfloor-1} (s)$, proving the theorem.
\end{proof}

\section{Diffuse Reflection Paths between Boundary Points}
\label{sec:boundary}

In this section, we prove Theorem~\ref{thm:boundary}. If the light source $s$ is in the interior of the polygon $P$, then Theorem~\ref{thm:upper} and Proposition~\ref{pp:closure} readily imply $P\subseteq R_{\lfloor n/2\rfloor}\subseteq V_{\lfloor n/2\rfloor}(s)$. It remains to consider diffuse reflection paths between points $s,t\in \partial P$, that is, between points on the boundary of $P$. If $s$ is a vertex of $P$, then no other points on the edges of $P$ incident to $s$ are illuminated by $s$. For example, in a triangle $P$, a diffuse reflection path between two vertices requires two turns (Fig.~\ref{fig:triangle}, left), and consequently $\overline{D}(P)=2$.

\begin{figure}[hbt]
\centering
\includegraphics[width=\textwidth]{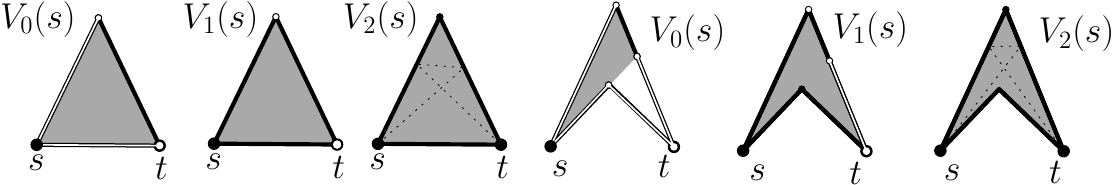}
\caption{The regions of a triangle (resp., a nonconvex quadrilateral) illuminated by the light source $s$ at the vertex after 0, 1, and 2 diffuse reflections.}
\label{fig:triangle}
\end{figure}

Let $s\in \partial P$, where $P$ is a simple polygon. We cannot use Proposition~\ref{pp:closure} when $s\in \partial P$. Proposition~\ref{pp:closure} is replaced by the following weaker statement.

\begin{proposition}
\label{pp:closure+boundary}
Let $s\in \partial P$ and $k\in \mathbb{N}_0$. Suppose that $U\subseteq V_k(s)$ such that ${\rm cl}(U)$ is a simple polygon. Then, $V_{k+1}(s)$ contains all points of $\partial U$ with the possible exception of the convex vertices of ${\rm cl}(U)$; and we have ${\rm cl}(U)\subseteq V_{k+2}(s)$.
\end{proposition}

\begin{proof}
Let $p\in \partial U\setminus U$. For every $i\in \mathbb{N}$, there is a point $p_i \in V_k(s)$ lying in a $\frac{1}{i}$-neighborhood of $p$ such that there is a diffuse reflection path $(s,r_i(1),\ldots , r_i(\ell),p_i)$ with $\ell\leq k$, where the points $r_i(1),\ldots , r_i(\ell)$ lie in the interior of some edges of $P$. By perturbing $p_i$, if necessary, we may assume that the ray $\overrightarrow{r_i(\ell)p_i}$ hits the boundary of $P$ at a point $q_i$ lying in the relative interior of an edge of $P$. If $i\in \mathbb{N}$ is sufficiently large, then $q_i$ directly sees $p$, unless $p$ and $q_i$ lie on the same edge of $P$, which means that $p$ is a convex vertex of ${\rm cl}(U)$. Consequently, if $p$ is not a convex vertex of ${\rm cl}(U)$, then $(s,r_i(1),\ldots , r_i(\ell),q_i,p)$ is a diffuse reflection path of length at most $k+1$ from $s$ to $p$.
If $p$ is a convex vertex of ${\rm cl}(U)$, then there is a point $r_i$ in the visibility polygon of $q_i$
that directly sees $p$, and so $(s,r_i(1),\ldots , r_i(\ell),q_i,r_i,p)$ is a diffuse reflection path of length at most $k+2$ from $s$ to $p$.
\end{proof}

We can define $R_k$ analogously to Section~\ref{sec:regions}. Let $R_0=V_0(s)$; and for $k\geq 1$,
let $R_{k}$ be the union of $R_{k-1}$, the sets $W_{ab}$ for all windows $ab$ of $R_{k-1}$, the boundary $\partial R_{k-1}$ with the exception of the convex vertices of ${\rm cl}(R_{k-1})$, and ${\rm cl}(R_{k-2})$ if $k\geq 2$. Proposition~\ref{pp:property} holds for $R_k$ for all $k\in \mathbb{N}_0$; and similarly to Corollary~\ref{cor:rk-in-vk}, we have $R_k\subseteq V_k(s)$ for all $k\in \mathbb{N}$.

Recall that $\mu_k$ is the number of edges of $P$ weakly covered by region $R_k$, and $\lambda_k$ is the number of windows of $R_k$. Instead of~\eqref{eq:1}, we maintain the following inequality for all $k \in \NN$:
\begin{equation}\label{eq:1+boundary}\tag{$\star\star$}
\mu_k\geq \min(2k+2,n).
\end{equation}
Inequality~\eqref{eq:1+boundary} combined with Proposition~\ref{pp:closure+boundary+} below, readily implies Theorem~\ref{thm:boundary}.

\begin{proposition}
\label{pp:closure+boundary+}
If $k\geq 1$ and $R_k$ weakly covers all edges of $P$ (i.e., $\mu_k=n)$, then $V_{k+1}(s)=P$.
\end{proposition}
\begin{proof}
Since $\mu_k=n$, then $R_k$ has no windows and ${\rm int}(P)\subseteq R_k$.
By Proposition~\ref{pp:closure+boundary}, $R_{k+1}$  contains $\partial P$ with the possible exception of the convex vertices of $P$.  As $R_k \subseteq R_{k+1}$ and $R_{k+1} \subseteq V_{k+1}(s)$,  it only remains to show that all convex vertices of $P$ are in $V_{k+1}(s)$. Consider a convex vertex $v$ of $P$. If $v\in \partial R_{k-1}$, then $v\in R_{k+1}$ by Proposition~\ref{pp:closure+boundary}. Suppose $v \in \partial R_k$ but $v\not\in \partial R_{k-1}$. Then $v$ is incident to some region $U_{ab}$ separated from $R_{k-1}$ by a window $ab$, where ${\rm int}(U_{ab})\subseteq W_{ab}$. If $U_{ab}$ is saturated, then all boundary points of $U_{ab}$ are in $R_k$, hence in $R_{k+1}$ and $V_{k+1}(s)$. If $U_{ab}$ is unsaturated, then ${\rm int}(U_{ab})$ is visible from a segment $c \in R_{k-1}$ described in property (iii). Since $v\not\in R_k$, vertex $v$ is incident to the edge of $P$ that contains $c$. In this case, however, there is a diffuse reflection path from $c$ to $v$ with one reflection, and as $c \in R_{k-1} \subseteq V_{k-1}(s)$, then $v\in V_{k+1}(s)$, as desired.
\end{proof}

We argue that \eqref{eq:1+boundary} holds for all $k\in \mathbb{N}$. Lemma~\ref{lem:saturated} holds when $s\in \partial P$, but some of the propositions in Section~\ref{sec:upper} require adjustments. Proposition~\ref{pp:ini} (i.e., $\mu_0\geq 3$) is replaced by the following:

\begin{proposition}\label{pp:ini+boundary}
If $s\in \partial P$, then $R_0=V_0(s)$ weakly covers at least one edge of $P$, and $R_1$ weakly covers at least $\min(n,4)$ edges of $P$.
\end{proposition}

\begin{proof}
As argued in the proof of Proposition~\ref{pp:ini}, the boundary of $R_0=V_0(s)$ contains line segments from at least three edges of $P$.
Hence ${\rm cl}(R_0)$ weakly covers at least three edges of $P$.
However,~$s$ cannot see any point in the edges of $P$ that contain~$s$.
At most two edges of $P$ contain $s$, hence $V_0(s)$ weakly covers at least one edge of $P$.

All interior points of the edges of $\partial R_0$ can be reached from $s$ after one diffuse reflection.
Hence the region $R_1$ covers at least three edges of $P$ that are weakly covered by ${\rm cl}(R_0)$.
This completes the proof for $n=3$. If $n\geq 4$, then either ${\rm cl}(R_0)=P$ and so $R_1$ weakly
covers all edges of $P$, or $R_0$ has a window and $R_1$ covers at least one edge behind the window by Lemma~\ref{lem:saturated}(1).
\end{proof}

By Proposition~\ref{pp:ini+boundary}, inequality~\eqref{eq:1+boundary} holds in the initial case $k=1$, i.e., $\mu_1\geq 4$. In this section, we consider a region $R_k$ \emph{critical} if $\mu_k=2k+2$ and $\mu_k<n$.
Conditions~(A) and~(B) can now be adapted verbatim.
Propositions~\ref{pp:ini1} and \ref{pp:ini2} are replaced by a single claim about $R_1$:

\begin{proposition}\label{pp:ini1+boundary}
If $R_1$ is critical, then
$R_1$ satisfies (A).
\end{proposition}

\begin{proof}
If $R_1$ is critical, then ${\rm cl}(R_1)\neq P$, and so $R_0$ and $R_1$ each have at least one window. Recall that ${\rm cl}(R_0)$ weakly covers at least three edges of $P$, and $R_1$ weakly covers at least one additional edge of $P$ that is not weakly covered by ${\rm cl}(R_0)$. Since $R_1$ is critical, we have $\mu_1=4$, hence ${\rm cl}(R_0)$ weakly covers precisely three edges of $P$, and $R_1$ weakly covers precisely one additional edge $ad$. By Lemma~\ref{lem:saturated}, $R_0$ has a unique unsaturated window, say $ab$. Refer to Fig.~\ref{fig:zigzags}.

Because the window of $R_0$ is unsaturated, $R_1={\rm cl}(R_0)\cup W_{ab}$, where $W_{ab}=V_0(c)\cap U_{ab}$ for a segment $c$ described in property (iii) lying in the relative interior of some edge $e$ of $P$. Since the region $R_1$ weakly covers only one new edge not weakly covered by ${\rm cl}(R_0)$, the supporting line of $e$ intersects $ad$ at some point $\beta$. Denote by $\alpha$ the endpoint of $e$ that lies in the segment $b\beta$. Observe that $W_{ab}=\Delta(ab\beta)\setminus b\beta$, and all windows of $R_1$ are contained in $\alpha\beta$, which is collinear with $e$. 
 Every chord of $P$ that crosses $\alpha\beta$ has an endpoint in either the relative interior of $a\beta$ or in ${\rm cl}(R_0) \setminus \{e\}$, which are contained in $R_1$. As desired, all windows of $R_1$ are saturated and $R_1$ satisfies (A).
\end{proof}

After replacing~\eqref{eq:1} with~\eqref{eq:1+boundary} and using the new definition of critical regions, Propositions~\ref{pp:crit}, \ref{pp:1}, \ref{pp:A}, and \ref{pp:prop-crit-three-step}, as well as Lemma~\ref{lem:induction}, go through, showing that \eqref{eq:1+boundary} is maintained for all $k\in \mathbb{N}$.
We are now ready to prove Theorem~\ref{thm:boundary}.

\begin{figure}[htb]
\centering
\includegraphics[width=0.9\textwidth]{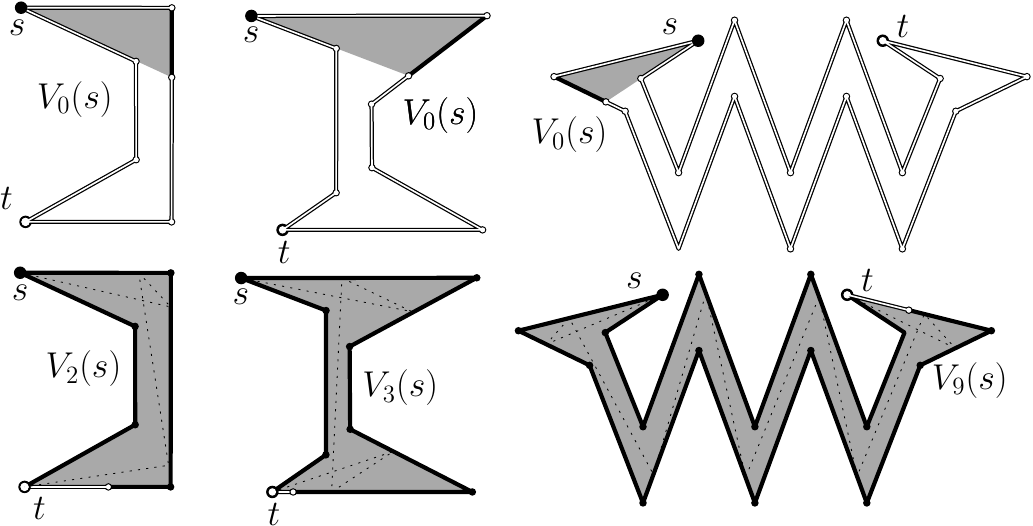}
\caption{Simple polygons with $n=6$, 8, and 18 vertices, where
every diffuse reflection path between $s$ and $t$ has at least
$\lfloor n/2 \rfloor$ turns.
Top row: the regions $V_0(s)$.
Bottom row: the regions $V_{\lfloor n/2\rfloor-1}(s)$, which
contain the interior ${\rm int}(P)$, and all points on the boundary
$\partial P$ except for a line segment incident to $t$.}
\label{fig:zigzags}
\end{figure}

\begin{customthm}{\ref{thm:boundary}}
We have $\overline{D}(3)=2$ and $\overline{D}(n)= \lfloor n/2\rfloor$ for every integer $n\geq 4$.
\end{customthm}

\begin{proof}
It is easily verified that $D(3)=2$ (see Fig.~\ref{fig:triangle}).
We show that in every simple polygon $P$ with $n\geq 4$ vertices,
there exists a diffuse reflection path with at most $\lfloor n/2\rfloor$
reflections between any two points $s,t\in P$. Theorem~\ref{thm:upper}
implies that it is enough to prove $P \subseteq V_k(s)$ for every $s\in \partial P$ and $k\geq \lfloor n/2\rfloor$. By~\eqref{eq:1+boundary}, $R_{\lceil n/2\rceil-1}$ weakly covers all
edges of $P$. It follows that region $R_{\lceil n/2\rceil-1}$ does not have any window,
and so ${\rm int}(P)\subseteq R_{\lceil n/2\rceil-1}$.

Suppose first that $n$ is even. Then ${\rm int}(P)\subseteq R_{\lceil n/2\rceil-1}$ combined with Proposition~\ref{pp:closure+boundary+} yields $P = V_{\lceil n/2\rceil}(s)$, hence $P=V_{\lfloor n/2\rfloor}(s)$, as required.
Suppose now that $n$ is odd and $n=2\ell+1$ for some $\ell>1$.
By~\eqref{eq:1+boundary}, $R_{\ell-1}$ weakly covers at least $2\ell=n-1$ edges of $P$.
If $R_{\ell-1}$ weakly covers all edges of $P$, then $P=V_\ell(s)=V_{\lfloor n/2\rfloor}(s)$ by Proposition~\ref{pp:closure+boundary+}. Otherwise $R_{\ell-1}$ weakly covers exactly $n-1$ edges of $P$.
Consequently, $R_{\ell-1}$ is critical, and we have $\lambda_{\ell-1}=1$ by Lemma~\ref{lem:saturated}(1).
By Lemma~\ref{lem:induction}(2), $R_{\ell-1}$ has a saturated window $ab$. In this case, either $U_{ab}=ab$ or $U_{ab}$ is a triangle adjacent to $ab$. Hence $U_{ab}\subseteq V_0(ab)$, and so $P=R_\ell=R_{\lfloor n/2\rfloor}\subseteq V_{\lfloor n/2\rfloor}(s)$, as claimed.

The matching lower bound $\overline{D}(n)\geq \lfloor n/2\rfloor$ for $n\geq 4$ follows from a family
construction. For every $n\geq 4$, there is a simple polygon $P_n$ with $n$ vertices, including $s,t\in P_n$, such that every diffuse reflection path between $s$ and $t$ has $\lfloor n/2\rfloor$ reflections. For odd $n$, $n\geq 5$, the polygon $P_n$ is obtained by subdividing an arbitrary edge of $P_{n-1}$. The polygon $P_4$ is a nonconvex quadrilateral, where $s$ and $t$ are two opposite convex corners (Fig.~\ref{fig:triangle}). Polygon $P_6$ is depicted in Fig.~\ref{fig:zigzags}. For even integers $n\geq 8$, the polygon $P_n$ is constructed by attaching two nonconvex quadrilaterals to a zigzag polygon as in Fig.~\ref{fig:zigzags}.
\end{proof}

\section{Conclusion}

We have shown that in every simple polygon with $n$ vertices, every point light source $s\in {\rm int}(P)$ can illuminate the interior of $P$ after at most $\lfloor n/2\rfloor -1$ diffuse reflections, and this bound is the best possible. A point light source $s\in P$, either in the interior or boundary of $P$, can illuminate $P$ after at most $\lfloor n/2\rfloor$ diffuse reflections for any $n \geq 4$, and this bound is tight. However, the diffuse reflection diameter may be significantly smaller for a given polygon $P$ (e.g., convex polygons).
Several problems related to diffuse reflection paths remain open:

\begin{itemize}\itemsep -1pt
\item
Is there an efficient algorithm for finding the diffuse reflection diameter of a given simple polygon $P$ with $n$ vertices? Combining our result with the bound $O(n^9)$ on the complexity of $V_k(s)$ by Aronov \etal~\cite{AD06}, we can compute in polynomial time the minimum $k\in \mathbb{N}_0$ such that $P=V_k(s)$ for any point $s\in P$. But it is unclear how many points $s\in P$ would have to be tested to find the maximum.
\item
Is there an efficient data structure for a simple polygon $P$ that, for a query point pair $s,t\in {\rm int}(P)$, would report a diffuse reflection path between $s$ and $t$ with the minimum number of reflections? Arkin \etal~\cite{AMS95} designed a data structure for analogous queries for minimum link paths in a simple polygon.
\item
What is the maximum diffuse reflection diameter of a star-shaped polygon $S$ with $n$ vertices?
Even though every point of $S$ is visible from some point $s\in S$, it is not clear how a diffuse reflection path could take advantage of this property.
Our lower bound constructions do not extend to star-shaped polygons.
\item
What is the maximum diffuse reflection diameter of a simple polygon with $n$ vertices, $r$ of which are reflex? It is clear that no reflection is necessary for $r=0$, but the dependence on the parameter $r$ is not clear.
\item
What is the maximum diffuse reflection diameter of a polygon with $h$ holes and a total of $n$ vertices?
\end{itemize}

\section*{Acknowledgements}

We would like to thank all reviewers for helpful comments improving the presentation of the paper and the MIT-Tufts Research Group on Computational Geometry for initial discussions of the problem.


\end{document}